\renewcommand{\geq}{\geqslant}
\renewcommand{\leq}{\leqslant}
\renewcommand{\epsilon}{\varepsilon}
\renewcommand{\Delta}{\Updelta}
\newcommand{\calA}{\ensuremath{\mathcal{A}}}
\newcommand{\calB}{\ensuremath{\mathcal{B}}}
\newcommand{\calC}{\ensuremath{\mathcal{C}}}
\newcommand{\calR}{\ensuremath{\mathcal{R}}}
\newcommand{\indexing}{\textsc{Indexing}\xspace}
\newcommand{\equality}{\textsc{Equality}\xspace}
\newcommand{\upto}{\ensuremath{\ldots}}
\newcommand{\pregion}{\mbox{p-region}\xspace}
\newcommand{\pregions}{\mbox{p-regions}\xspace}
\newcommand{\prepresentation}{\mbox{p-representation}\xspace}
\newcommand{\LCE}{\ensuremath{\textup{LCE}}}
\newcommand{\LCEP}{\ensuremath{\text{LCE}_\text{p}}\xspace}
\title{Pattern Matching in Multiple Streams\footnote{This work was partially supported by EPSRC.}}
\author{
    Rapha\"{e}l Clifford\inst{1} \and
    Markus Jalsenius\inst{1} \and
    Ely Porat\inst{2} \and
    Benjamin Sach\inst{3}}
\institute{
    University of Bristol, Department of Computer Science, Bristol, UK \and
    Bar-Ilan University, Department of Computer Science, Ramat-Gan, Israel \and
    University of Warwick, Department of Computer Science, Coventry, UK}
\begin{document}

\maketitle


\begin{abstract}

    We investigate the problem of deterministic pattern matching in multiple
    streams. In this model, one symbol arrives at a time and is associated with one of $s$
    streaming texts. The task at each time step is to report if there is a new match between a
    fixed pattern of length $m$ and a newly updated stream.  As is
    usual in the streaming context, the goal is to use as little space
    as possible while still reporting matches quickly.  We
    give almost matching upper and lower space bounds for three distinct
    pattern matching problems.  For exact matching we
    show that the problem can be solved in constant time per arriving symbol and $O(m + s)$ words of space.  For the
    $k$-mismatch and $k$-difference problems we give $O(k)$ time solutions that require $O(m+ks)$ words of space. In all three cases we also give space lower bounds which show our methods are optimal up to a single logarithmic factor.  Finally we set out a number of open problems related to this new model for pattern matching.
\end{abstract}

\thispagestyle{plain}

\section{Introduction}

We introduce a new set of problems centered on pattern matching in multiple streaming
texts. In this model, one symbol arrives at a time and is added to the tail
of exactly one of $s$ streaming texts.   The task at each time step is to
report if there is a new match between a fixed pattern $P$ of length
$m$ and a newly updated stream.  Our interest is in deterministic
algorithms with guaranteed worst case time complexity.  The goal is to
use as little space as possible while still reporting matches quickly.

The problem of pattern matching in a single stream using limited space had
a major breakthrough in 2009 when it was shown that exact
matching can be performed using only $O(\log{m})$ words of space and
$O(\log{m})$ time~\cite{Porat:09}. This result was subsequently simplified~\cite{EJS:2010}
and then improved~\cite{BG:2011} to run in constant time per new
symbol.   To achieve this small space, these methods are however all
necessarily randomised and allow some small probability of
error. Where neither randomisation nor error is permitted, a
straightforward argument shows us that there is no
hope of using space sublinear in the pattern size.  This follows
directly from the observation that we could use such a matching
algorithm to reproduce the pattern in its entirety and that it
therefore must use at least linear space.

Where there are multiple streams however the situation is not as clear cut
even where no randomisation is allowed.  A naive approach would simply
be to store $s$
copies of the working space, one for each stream. This will typically then require $\Theta(ms)$
space overall. Where the number of streams $s$ is large, the space
usage of such an approach is therefore likely to be prohibitive.

Our contribution is first to show that for three particularly common pattern matching
problems, pattern matching in multiple streams can be solved in
considerably less than $\Theta(ms)$ space without the use of randomisation.
In Section~\ref{sec:exact} we give a constant time algorithm for exact matching that requires
only $O(m+s)$ words of space.
In Section~\ref{sec:lce} we review a recently introduced data structure which allows us to perform longest common extension (LCE) queries between a streaming text and a fixed pattern in constant time and $O(m)$ space.  This data structure is then used in Sections~\ref{sec:k-mismatch}
and~\ref{sec:k-diff}  where we give $O(k)$ time and $O(m + ks)$ space solutions to the $k$-mismatch and
$k$-difference problems.  In Section~\ref{sec:space} we
give almost matching space lower bounds for our problems as well as lower bounds for some other common pattern matching problems. Finally in
Section~\ref{sec:open} we set out a number of open problems that immediately arise for this new model of pattern matching.


\section{Related work}

Randomised space lower bounds for a wide range of pattern matching
problems in a single stream were given in~\cite{CJPS:2011}.
In~\cite{CEPP:2011,CS:2011} it was also shown that a large set of
pattern matching algorithms could be converted from offline to online
with only at worst a multiplicative logarithmic factor overhead in
their time complexity. This therefore provided an effective
deamortisation of almost the entire field of combinatorial pattern
matching and a ready tool for the construction of fast streaming
pattern matching algorithms.

In the more usual offline setting, a great deal of progress has been made in finding fast  algorithms for
a variety of approximate matching problems. One of the most
studied is the Hamming distance which measures the number of
single character mismatches between two strings.  Given a text of length $n$ and a
pattern of length $m$, the task is to report the Hamming distance at
every possible alignment.   Solutions running in $O(n\sqrt{m\log{m}})$ time which are based
on repeated applications of the FFT were given independently by both
Abrahamson and Kosaraju in 1987~\cite{Abrahamson:1987,Kosaraju:1987}.
Particular interest has been paid to the bounded variant we also
consider called the
$k$-mismatch problem.  Here a bound $k$ is given and we need only
report the Hamming distance if it is less than or equal to $k$.  If  the number of mismatches is greater than the bound, the algorithm need only report that fact and not give the actual Hamming distance. In 1986 Landau and Vishkin gave a  beautiful $O(nk)$ time algorithm that is not FFT based and uses constant time
lowest common ancestor (LCA) operations on the suffix tree of the pattern and
text~\cite{LV:1986a}. This was subsequently  improved to $O(n \sqrt{k\log{k}})$ time by a method based on filtering and FFTs again~\cite{ALP:2004}.  A separate line of research considered
the question of how to find approximations within a $(1+\epsilon)$ multiplicative factor of the Hamming distance~\cite{Indyk:1998,Karloff:1993}. The edit distance measures the minimum number of single character insert, delete and mismatch operations required to transform one string into another. We  consider a bounded problem called $k$-difference which reports the edit distance at every alignment only if it is at most $k$. An $O(nk)$ solution to this problem was given in~\cite{LV:1988} using LCE queries to perform jumps within the dynamic programming table.

\subsection{Preliminaries and a new model for multiple streams}

In order to design algorithms for pattern matching in multiple streams
we first define a new computational model which we will use throughout. In this new model there is only one stream, however
we will carefully distinguish between space that any pattern matching
algorithm uses that is associated with the pattern and
space associated with the text.  We call this model, which may be of
independent interest, the \emph{single stream read-only
  preprocessing model} or simply the \emph{read-only preprocessing model} for short.

Any algorithm that operates in this model operates in two phases: a
preprocessing phase followed by an online phase. During the
preprocessing phase, the algorithm processes the pattern without any
knowledge of the text. The output of the preprocessing phase is termed
the \emph{pattern~space}.   The online phase
is provided read-only access to a copy of the pattern space. It is
important to note that the online phase is not provided direct access
to the pattern unless the pattern is explicitly stored in the pattern
space. The online phase continues as in the original single stream model, processing each text character as it arrives. Any space used by the online phase in addition to the read-only pattern space is termed \emph{text~space}. The text space can therefore only store information that is associated with the text and its relation to information already stored in the pattern space.

Algorithms developed in the read-only preprocessing model translate directly to the
multiple stream model that we are interested in. Consider an algorithm
in the read-only preprocessing model. Let $f$ be the
time taken per new arriving symbol, $g_\textup{p}$ the pattern space
and $g_\textup{t}$ the text space of this algorithm. Since the pattern
space is independent of the text, it can be shared across multiple
texts. Therefore we can directly derive a new pattern matching
algorithm in the multiple stream model which runs in $O(f)$ time per
character using $O(g_\textup{p}+sg_\textup{t})$ space.  The space and
time requirements of the preprocessing stage are arguably of less
significance given that they are a function only of the pattern size.
Nevertheless, for all three pattern matching problems we consider,
the preprocessing stage can be implemented in $O(m)$ space and $O(m\log{m})$ time.
We assume throughout that all computation is performed in the unit cost RAM model.

\section{Exact matching}\label{sec:exact}

We begin by giving a real-time variant of the KMP algorithm. We will apply it to the read-only preprocessing model ensuring it takes
$O(1)$ time, using $O(m)$ pattern space and $O(1)$ text space,
so that in the multiple stream model it will take $O(1)$ time and use $O(m+s)$ space.
The more usual real-time variant provided by Galil~\cite{Galil:1981} will not suffice for our purposes as it buffers the text and therefore would use $O(ms)$ space in the multiple stream model.

First recall that at each stage of the standard KMP algorithm we keep track of the longest prefix of the pattern that matches a suffix of the text seen so far. When a new character $T[i]$ arrives, we extend the matching prefix by $T[i]$ if possible, otherwise we shift the pattern according to a pre-computed prefix table, also known as the failure function. More precisely, suppose that $P[0\upto j-1]$ matches $T[i-j\upto i-1]$ when $T[i]$ arrives. If $P[j]=T[i]$, we extend the match, otherwise we look at position $j$ of the prefix table, which gives us the largest value $0<j'<j$ such that $P[0\upto j'-1]$ matches $T[i-j'\upto i-1]$ and $P[j']\neq P[j]$. Then we compare $P[j']$ with $T[i]$ to see if we can extend the new match by $T[i]$. If not, we shift the pattern again using the prefix table, and so on.

While it is well-known that the time complexity per
character is $O(1)$ amortised, our motivations call for an unamortised
solution. Instead of using a prefix table, for each position $j$ of the pattern we store a dictionary $D_j$ that contains every pair $(\sigma, j')\in \Sigma \times \{1,\dots,m-1\}$ where $0<j'<j$ is the largest index such that $P[0\upto j'-1]$ matches $P[j-j'\upto j-1]$ and $P[j']=\sigma \neq P[j]$.
The pairs $(\sigma, j')$ in $D_j$ are indexed by the symbol $\sigma$ and $\Sigma$ denotes the alphabet.
In other words, whenever a match $P[0\upto j-1]$ cannot be extended to $P[0\upto j]$ (i.e.,~$P[j]\neq T[i]$), instead of repeatedly shifting the pattern according to the prefix table until $T[i]$ is matched, we look up symbol $T[i]$ in $D_j$ and immediately get the length $j'$ of the prefix of $P$ that the KMP algorithm would eventually align with $T[i-j'\upto i-1]$ in order to be able to extend the match to $P[j']=T[i]$. The dictionaries $D_j$ can be pre-computed as all shifts are based on self-matches with the pattern itself. By using a static perfect dictionary we can do lookups in constant time. Thus, a KMP algorithm equipped with
these dictionaries instead of a prefix table will run in unamortised $O(1)$ time per character.
An interesting fact is that the total space usage to store these dictionaries is only $O(m)$. The following lemma was proved in~\cite{Simon:1993}, where it was stated in the language of finite automata. For completeness we include a proof.

\begin{lemma}[Theorem 1 of~\cite{Simon:1993}]
    \label{lem:shifts}
    The sum of the sizes of all dictionaries, \[ \sum_{j=0}^{m-1} |D_j| \leq m\,.\]
\end{lemma}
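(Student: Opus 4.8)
The plan is to construct an injection from the disjoint union $\bigcup_{j=0}^{m-1}\bigl(\{j\}\times D_j\bigr)$ into $\{1,\dots,m-1\}$; this yields $\sum_{j=0}^{m-1}|D_j|\le m-1\le m$ at once. (When $m\le 1$ every $D_j$ is empty, so there is nothing to do; assume $m\ge 2$.)

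First I would record exactly what membership in $D_j$ gives. If $(\sigma,j')\in D_j$ then $0<j'<j$, the equality $P[0\upto j'-1]=P[j-j'\upto j-1]$ holds (so $j'$ is a proper border length of $P[0\upto j-1]$), and $\sigma=P[j']\neq P[j]$. I would also note that inside a single $D_j$ the second coordinate determines the first ($\sigma=P[j']$) and, by the defining phrase ``$j'$ is the \emph{largest} index such that\dots'', each symbol $\sigma$ contributes at most one pair; hence distinct pairs of $D_j$ carry distinct values of $j'$, and therefore distinct values of $j-j'$.

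The map I would use is $(j,(\sigma,j'))\mapsto j-j'$, the starting position of the border occurrence inside $P$. Since $1\le j'\le j-1\le m-2$, its image lies in $\{1,\dots,m-2\}\subseteq\{1,\dots,m-1\}$. The one real point is injectivity. Suppose two triples are sent to the same value $p$. By the remark above they do not both lie in the same $D_j$, so write them as $(j_1,(\sigma_1,j'_1))$ and $(j_2,(\sigma_2,j'_2))$ with $j_1<j_2$; from $j_1-j'_1=j_2-j'_2=p$ we get $j'_1<j'_2$. Apply the border property of the \emph{second} triple: $P[0\upto j'_2-1]=P[p\upto p+j'_2-1]$. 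Reading this equality at offset $j'_1$ (legitimate because $j'_1\le j'_2-1$) and using $p+j'_1=j_1$ gives $P[j'_1]=P[j_1]$. But the border property of the \emph{first} triple says $P[j'_1]=\sigma_1\neq P[j_1]$ — a contradiction. Hence the map is injective and the bound follows.

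The only genuinely non-routine step is recognising that ``the start position $j-j'$ of the border'' is the right quantity to charge to, and that the inequality $P[j']\neq P[j]$ built into the definition of $D_j$ is precisely what rules out a collision between a long and a short border anchored at the same position; the boundary cases ($D_0=D_1=\varnothing$, the range of the image, small $m$) are immediate.
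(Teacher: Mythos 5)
Your proof is correct and is essentially the paper's own argument: you charge each dictionary entry $(\sigma,j')\in D_j$ to the shift length $j-j'$ and show distinct entries induce distinct shifts, using the self-match $P[0\upto j'_2-1]=P[j_2-j'_2\upto j_2-1]$ of the longer border against the built-in mismatch $P[j'_1]\neq P[j_1]$ of the shorter one. The paper phrases this as ``each shift length $\ell\in\{1,\dots,m-1\}$ is used at most once'' rather than as an explicit injection, but the contradiction derived is identical.
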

\begin{proof}
    A lookup in the dictionary $D_j$ results in a shift of the pattern. We show that for every length $\ell\in\{1,\dots,m-1\}$ there is at most one element over all dictionaries that moves the pattern along by $\ell$ positions. For contradiction, suppose that some $(\sigma_1,j'_1)\in D_{j_1}$ and $(\sigma_2,j'_2)\in D_{j_2}$, where $j_1<j_2$, both shift the pattern by $\ell$. By definition, we have $P[j_1]\neq P[j'_1]$. If the same shift $\ell$ is applied when the $j_2$-length prefix of $P$ is moved along, we must have $P[j_1]=P[j_1-\ell]=P[j'_1]$, which leads to a contradiction.
    \qed
\end{proof}

By using static perfect hashing, we can store all dictionaries $D_0,\dots,D_{m-1}$ in $O(m)$ pattern space. Although preprocessing time is not our focus, we briefly discuss how to construct the dictionaries in $O(m \log \log m)$ time.

We begin by constructing the standard KMP prefix table in $O(m)$ time. We then construct each dictionary $D_j$ by considering $j$ in increasing order, starting with $j=0$. For any $j\geq 0$, the dictionary $D_j$ is constructed as follows. Let $j'<j$ be the index given by the original KMP prefix function for $P[0 \ldots j]$. The elements in
$\Set{ (\sigma, j'') \in D_{j'} | \sigma \neq P[j] } \cup \{(P[j'], j')\}$ are added to $D_j$. These are precisely the elements that belong to $D_j$ according to its definition. Over all~$j$, gathering the elements to be added to the dictionaries takes $O(m)$ time. The running time is therefore dominated by the time it takes to insert the elements into the dictionaries. By using the the static dictionary of Ru\v{z}i\'{c}~\cite{Ruzic:08}, construction takes
\begin{equation*}
    \sum_{j=0}^{m-1} O\big(|D_j| \log{\log {|D_j|}}\big) \,\leq\, O(m \log\log m) \,.
\end{equation*}
time, where Lemma~\ref{lem:shifts} has been applied.

The next lemma summarises the result, which together with the properties of the read-only preprocessing model gives us Theorem~\ref{thm:exact}.

\begin{lemma}
    Exact matching can be solved in the read-only preprocessing model in $O(1)$ time per character and using $O(m)$ pattern space and $O(1)$ text space.
\end{lemma}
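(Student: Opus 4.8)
The plan is to verify that the KMP variant described just above the lemma---using the precomputed dictionaries $D_j$ in place of the prefix table---meets all three resource bounds when cast in the read-only preprocessing model. The argument splits cleanly along the model's two phases. In the preprocessing phase we build, from the pattern alone, the standard KMP prefix table together with the dictionaries $D_0,\dots,D_{m-1}$, stored via static perfect hashing; by Lemma~\ref{lem:shifts} the total number of stored pairs is at most $m$, so the pattern space is $O(m)$ words. All of this data depends only on $P$, so it is legitimately pattern space. In the online phase we maintain a single integer $j$, the length of the longest prefix of $P$ currently matching a suffix of the text seen so far; this is the only state the online phase needs, giving $O(1)$ text space.

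Next I would spell out the online step and its running time. When $T[i]$ arrives with $P[0\upto j-1]$ matching $T[i-j\upto i-1]$: if $P[j]=T[i]$ we set $j\leftarrow j+1$ (reporting a match when $j$ reaches $m$, then shifting via $D_m$ or the prefix table as usual); otherwise we look up $T[i]$ in the dictionary $D_j$. By the defining property of $D_j$, either the lookup returns some $(T[i],j')$, in which case $P[0\upto j'-1]$ matches $T[i-j'\upto i-1]$ and $P[j']=T[i]$, so we may immediately set $j\leftarrow j'+1$; or $T[i]$ is absent from $D_j$, which---exactly as in the standard KMP correctness argument, since the chain of prefix-table shifts from $j$ visits precisely the indices $j'$ with the stated self-match property---means no nonempty prefix of $P$ both matches a suffix of $T[0\upto i]$ and can be extended by $T[i]$, so we set $j\leftarrow 0$ (or $j\leftarrow 1$ if $P[0]=T[i]$). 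Each arriving character therefore triggers at most one dictionary lookup, which is $O(1)$ by static perfect hashing, plus $O(1)$ bookkeeping.

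The only genuine content beyond routine checking is correctness: one must argue that a single dictionary lookup faithfully simulates the (possibly long) sequence of prefix-table shifts that unamortised KMP would perform. This follows because the failure-function iteration starting at $j$ enumerates exactly the candidate prefix lengths $j' < j$ with $P[0\upto j'-1]$ a suffix of $P[0\upto j-1]$ (equivalently, matching $T[i-j'\upto i-1]$), in decreasing order, and KMP stops at the first such $j'$ with $P[j']=T[i]$; by construction $D_j$ records precisely that first successful $j'$ for each symbol $\sigma = P[j']$, so the lookup lands on the same state, in $O(1)$ time rather than amortised $O(1)$. Combining the phase-by-phase resource accounting then yields the lemma; feeding it through the translation described in the preliminaries gives $O(1)$ time and $O(m+s)$ space in the multiple stream model, which is Theorem~\ref{thm:exact}.
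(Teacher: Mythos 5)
Your proposal is correct and follows essentially the same route as the paper: the paper establishes this lemma precisely by describing the dictionary-based real-time KMP variant, invoking Lemma~\ref{lem:shifts} together with static perfect hashing to bound the pattern space by $O(m)$, and observing that the online phase needs only the single index $j$ (hence $O(1)$ text space) with one constant-time lookup per arriving character. Your additional care with the boundary cases (the symbol absent from $D_j$, the check of $P[0]$ since $j'=0$ is excluded from the dictionaries, and the shift after a full match at $j=m$) is consistent with, and slightly more explicit than, the paper's presentation.
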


\begin{theorem}
    \label{thm:exact}
    Exact matching in the multiple stream model with $s$ texts can be solved in $O(1)$ time per character and $O(m+s)$ space.
\end{theorem}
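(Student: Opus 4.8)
The plan is to invoke the translation principle spelled out in the Preliminaries, so that the bulk of the work reduces to the preceding lemma about the read-only preprocessing model. Concretely, the preceding lemma gives us an exact matching algorithm in the read-only preprocessing model that runs in $O(1)$ time per arriving symbol, uses $O(m)$ pattern space, and uses $O(1)$ text space. The paper has already established that any such algorithm, with per-character time $f$, pattern space $g_\textup{p}$, and text space $g_\textup{t}$, yields a multiple-stream algorithm running in $O(f)$ time per character and $O(g_\textup{p} + s\,g_\textup{t})$ space. Substituting $f = O(1)$, $g_\textup{p} = O(m)$, and $g_\textup{t} = O(1)$ immediately gives $O(1)$ time per character and $O(m + s)$ space, which is exactly the claim. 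So the core of the argument is just this substitution.

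For completeness I would recall why the translation works in this setting. During the preprocessing phase we build, once and for all, the collection of dictionaries $D_0, \dots, D_{m-1}$ together with the pattern itself; by Lemma~\ref{lem:shifts} and the static perfect hashing discussion this occupies $O(m)$ words and constitutes the shared pattern space. For each of the $s$ streams we keep only the single integer $j$ recording the length of the longest prefix of $P$ currently matching a suffix of that stream — this is the $O(1)$ text space per stream, totalling $O(s)$. When a symbol arrives on stream $r$, we read that stream's counter $j$, and run one step of the real-time KMP variant: if $P[j] = T[i]$ we increment the counter (and report a match if $j+1 = m$, then reset via the dictionary as usual), otherwise we look up $T[i]$ in $D_j$ to obtain the new prefix length in constant time, with no further iteration needed. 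All reads of the dictionaries are read-only, so sharing them across streams is sound, and the per-character cost is $O(1)$.

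I would also note the remark promised earlier in the paper: the preprocessing stage, i.e.\ building the KMP prefix table and then the dictionaries $D_j$ via the static dictionary of Ru\v{z}i\'{c}, runs in $O(m\log\log m)$ time and $O(m)$ space, so the one-time cost is modest and depends only on the pattern. The only point that needs a sentence of care is confirming that the real-time KMP variant genuinely uses $O(1)$ text space and not, say, a text buffer — which is precisely the contrast drawn with Galil's construction, and is handled by the dictionary-based shift lookup that replaces the amortised chain of prefix-table consultations with a single constant-time query. Given that, there is no real obstacle: the theorem follows by plugging the lemma into the model translation.
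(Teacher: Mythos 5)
Your proposal is correct and matches the paper's own argument: the theorem is obtained by plugging the preceding lemma ($O(1)$ time, $O(m)$ pattern space, $O(1)$ text space for exact matching in the read-only preprocessing model) into the general translation $O(g_\textup{p}+sg_\textup{t})$ stated in the preliminaries. The additional detail you supply about sharing the dictionaries and keeping one prefix-length counter per stream is consistent with the paper's construction but not needed beyond the substitution.
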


\section{LCE queries in a stream}\label{sec:lce}

In preparation for the algorithms we give for the $k$-mismatch and
$k$-difference problems, we will be required to maintain a data
structure that allows us to
compute LCE queries in a streaming text in $O(1)$ time  and $O(m)$ space.  This method was first introduced
in~\cite{CS:2010}, although the idea of representing the text in terms of substrings of the pattern was first used in a different
setting in~\cite{ALLS:2007}. We provide a brief recap here.

We will split the streaming text into contiguous substrings which are encoded as triplets, $(i',j,\ell)$, each representing an $\ell$-length text substring $T[i'\upto (i'+\ell-1)]$ that equals a pattern substring $P[j\upto (j+\ell-1)]$. We refer to such a triple as \emph{\pregion} and a disjoint ordered sequence of triplets that encode the entire text as a \emph{\prepresentation}. For example, with $P=\texttt{babbac}$ and $T=\texttt{abcaababba}$, a \prepresentation of $T$ is \[(0,1,2),(2,5,1),(3,4,1),(4,1,2),(6,1,4).\] The \prepresentation is not necessarily unique. We say that it is of minimal length if it contains a minimal number of triplets.

In~\cite{CS:2010} it was shown that we can extend a minimal length \prepresentation of $T[0\upto i-1]$ to a minimal length \prepresentation of $T[0\upto i]$ in $O(1)$ time when symbol $T[i]$ arrives. More precisely, the extended \prepresentation is obtained greedily by updating the last \pregion if possible, otherwise adding a new \pregion $(i,j,1)$, where $j$ is some position such that $P[j]=T[i]$.
To accomplish this task in $O(1)$ time, a suffix tree of the pattern
can be used~\cite{CS:2010}. For our purposes in terms of pattern space
and text space, we may construct the suffix tree during the pattern
preprocessing phase and store it together with the pattern itself in
the pattern space. Deciding whether to update the last \pregion or add
a new one when a new symbol $T[i]$ arrives can then be done in
constant text space. For simplicity of explanation we will assume that
all symbols in $T$ occur at least once~in~$P$.  For further details of
the method and how to handle symbols which occur in the text but not
the pattern, we refer the interested reader to~\cite{CS:2010}.

As we will see shortly, a benefit of using a minimal length \prepresentation of $T$ is that we can answer longest common extension (LCE) queries between the pattern and $T[0\upto i]$ in $O(1)$ time. We write $\LCE(i',j)$ to denote the length of the longest prefix of $P[j\upto m-1]$ that is also a prefix of $T[i'\upto i]$.
The following lemma was stated as Lemma~1 in~\cite{CS:2010}.

\begin{lemma}
    \label{lem:three-regions}
    For a minimal length \prepresentation of $T$, at most three \pregions overlap $T[i'\upto (i'+\ell-1)]$, where $\ell$ is the length returned by any $\LCE(i',j)$ query.
\end{lemma}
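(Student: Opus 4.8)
The plan is to argue by contradiction: assume that at least four \pregions of the minimal \prepresentation overlap the text interval $I = T[i'\upto(i'+\ell-1)]$, and from this build a strictly shorter \prepresentation of $T$, contradicting minimality.

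I would first record two structural facts. (i) By definition of the query, $\LCE(i',j)=\ell$ guarantees both that $T[i'\upto(i'+\ell-1)]=P[j\upto(j+\ell-1)]$ and that $j+\ell-1\le m-1$; in particular, for every text position $p\in I$ we have $T[p]=P[j+p-i']$. (ii) Since the \pregions of a \prepresentation partition the text into consecutive intervals and $I$ is itself an interval, the \pregions overlapping $I$ form a consecutive block $R_1,\dots,R_q$ in the \prepresentation, because any \pregion lying between two that meet $I$ is sandwiched inside $I$ and hence also meets $I$. Writing $[i_a,e_a]$ for the text interval encoded by $R_a$, one checks readily that $R_1$ contains the position $i'$, that $R_q$ contains $i'+\ell-1$, and that every interior \pregion $R_a$ (with $2\le a\le q-1$) lies entirely inside $I$: its left end satisfies $i_a=e_{a-1}+1\ge i'+1$ because $R_{a-1}$ meets $I$ and so $e_{a-1}\ge i'$, and its right end satisfies $e_a=i_{a+1}-1\le i'+\ell-2$ because $R_{a+1}$ meets $I$ and so $i_{a+1}\le i'+\ell-1$.

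Now suppose $q\ge 4$. Then $R_2$ and $R_3$ are two consecutive interior \pregions, and together they encode a single contiguous text interval $[i_2,e_3]$ with $i_2\ge i'$ and $e_3\le i'+\ell-1$. By fact (i), $T[i_2\upto e_3]=P[(j+i_2-i')\upto(j+e_3-i')]$, and since $e_3-i'\le\ell-1$ we get $j+e_3-i'\le j+\ell-1\le m-1$, so the right-hand side is a genuine substring of $P$. Hence the single triple $(\,i_2,\;j+i_2-i',\;e_3-i_2+1\,)$ is a valid \pregion encoding exactly the text encoded by $R_2$ and $R_3$ together. Replacing $R_2$ and $R_3$ by this one triple, and leaving every other \pregion unchanged, produces a valid \prepresentation of $T$ with one fewer triple, contradicting minimality. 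Therefore $q\le 3$.

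The only delicate point is the index bookkeeping in the middle step: one must verify that the interior \pregions sit far enough inside $I$ that the merged triple's pattern component stays within $P[0\upto m-1]$, which is exactly where the bound $j+\ell-1\le m-1$ coming from the definition of the LCE query is used. I do not expect any further difficulty. Note finally that the argument uses only minimality of the \prepresentation, not the way it was built, and that it does not sharpen to ``two'' in general: the two outermost overlapping \pregions $R_1$ and $R_q$ may each protrude past an end of $I$ and so cannot, in general, be merged away.
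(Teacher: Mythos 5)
The paper itself does not prove this lemma---it is quoted verbatim from~\cite{CS:2010}---but your argument is correct, complete, and is essentially the standard one: since $T[i'\upto(i'+\ell-1)]$ equals $P[j\upto(j+\ell-1)]$ with $j+\ell-1\le m-1$ (because $\ell\le m-j$ by definition of the query), any two consecutive \pregions wholly contained in that interval could be merged into a single valid \pregion, contradicting minimality, and four or more overlapping \pregions force two such interior regions to exist. The index bookkeeping you flag as the delicate point checks out, as does your remark that only the interior regions (not the two boundary ones, which may protrude past the ends of the interval) are mergeable, which is exactly why the bound is three and not two.
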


It is well known that we can precompute a static data structure using
$O(m)$ space, denoted \LCEP, to support LCE queries between the
pattern and itself in $O(1)$ time. This is traditionally achieved with
a suffix tree on which lowest common ancestor queries are answered in
constant time. It now follows from Lemma~\ref{lem:three-regions} that
any $\LCE(i',j)$ query on the streaming text can be answered in $O(1)$ time by performing at most three pattern-pattern LCE queries in the \LCEP structure.

\section{$k$-mismatch in multiple streams}\label{sec:k-mismatch}

We described in Section~\ref{sec:lce} how a \prepresentation of the
text can be maintained in $O(1)$ time and $O(m)$ pattern space. The
actual \prepresentation of $T[0\upto i]$ will be stored in the text
space of size $O(i)$. Instead of storing every \pregion, we could
store only the most recent \pregions of the text. This representation
will of course only give us access to some suffix of the text seen so
far. Our algorithm for the $k$-mismatch problem, which we present in the read-only preprocessing
model in the first instance, will store the most
recent $4(k+1)$ \pregions, requiring $O(k)$ text space.

In order to determine whether $T[(i-m+1)\upto i]$ has at most $k$ mismatches with $P$ when $T[i]$ arrives, we apply the kangaroo technique~\cite{LV:1985} consisting of up to $k+1$ LCE queries between the text and the pattern. We will perform these LCE queries in the \emph{reverse} of $T$ and $P$, starting from the rightmost character $T[i]$. It should not be too hard to see that the data structures described in Section~\ref{sec:lce} can be modified to support reverse LCE queries between the pattern and the text with no effect on the asymptotic time and space complexities. Also, Lemma~\ref{lem:three-regions} holds for reverse LCE queries. In the next lemma we prove that all LCE queries performed by the algorithm fall within the $4(k+1)$ most recent \pregions.

\begin{lemma}
    \label{lem:k-mismatch}
    The $k$-mismatch problem can be solved in the read-only
    preprocessing model in  $O(k)$ time per character and using $O(m)$ pattern space and $O(k)$ text space.
\end{lemma}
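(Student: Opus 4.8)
The plan is to run the kangaroo technique of~\cite{LV:1985} in reverse over the window $T[(i-m+1)\upto i]$, using the reverse $\LCE$ machinery of Section~\ref{sec:lce}, and then to argue that every query issued touches only the $4(k+1)$ most recent \pregions, so that the truncated \prepresentation held in text space suffices.

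First I would describe the per-character routine. When $T[i]$ arrives we update the (truncated) \prepresentation in $O(1)$ time exactly as in Section~\ref{sec:lce}, discarding the oldest \pregion whenever more than $4(k+1)$ are stored; a circular buffer makes this $O(1)$. To test whether $T[(i-m+1)\upto i]$ matches $P$ with at most $k$ mismatches we start a pointer at the right end and repeat: issue a reverse $\LCE$ query between $T[\cdot\upto i]$ and $P[\cdot\upto m-1]$ to skip the maximal common run, record the mismatch immediately to its left, step one position further left, and continue. We stop as soon as $k+1$ mismatches have been recorded (report ``no match'') or the left end of the window is reached (report the number of mismatches found). This is the standard exact-count kangaroo argument, so correctness of the count is immediate. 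It performs at most $k+1$ reverse $\LCE$ queries, each answered in $O(1)$ time by Lemma~\ref{lem:three-regions} together with the \LCEP structure, plus $O(1)$ bookkeeping per step, giving $O(k)$ time per character. Pattern space is $O(m)$: the pattern, its (reversed) suffix tree, and the \LCEP structure. Text space is $O(k)$ provided the claim about \pregions holds.

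The crux is that \pregion bound. Observe that the text positions examined during the kangaroo form a \emph{contiguous} range ending at $i$: the maximal common runs returned by consecutive $\LCE$ queries are separated by exactly one mismatch character each (a run may have length $0$, but this still leaves no gap), so the examined positions are precisely some range $T[(i-\ell)\upto i]$. Hence the \pregions overlapping this range are a contiguous suffix of the \prepresentation, i.e.\ the most recently created ones, and it only remains to count them. The examined range decomposes into at most $k+1$ common runs interleaved with at most $k+1$ single mismatch characters. Each common run is the output of a reverse $\LCE$ query, so by Lemma~\ref{lem:three-regions} (which holds for reverse queries) it overlaps at most three \pregions; each mismatch character lies in a single \pregion. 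Summing, and noting that \pregions shared between consecutive pieces only reduce the count, the examined range meets at most $3(k+1)+(k+1)=4(k+1)$ \pregions, all among the $4(k+1)$ most recent. Thus the truncated \prepresentation stored in text space contains every \pregion any $\LCE$ query needs, so the routine runs correctly in $O(k)$ text space.

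I expect the main obstacle to be making the step ``contiguous range $\Rightarrow$ suffix of the \prepresentation'' and the interleaving count fully rigorous — in particular handling the degenerate cases where a reverse $\LCE$ query returns length $0$ (two adjacent mismatches) and where the left end of the window falls inside a common run rather than on a \pregion boundary. Neither case disturbs the contiguity of the examined range nor the $4(k+1)$ bound, so the argument still goes through; the rest is routine accounting, and the final theorem statement for the multiple-stream model follows by the generic translation described in Section~\ref{sec:lce}'s preliminaries.
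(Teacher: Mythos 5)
Your proposal is correct and follows essentially the same route as the paper: reverse kangaroo jumps answered via the reverse \LCE machinery, with the $4(k+1)$ bound obtained by charging at most three \pregions per \LCE query (Lemma~\ref{lem:three-regions}) plus one per skipped mismatch. Your explicit observation that the examined positions form a contiguous range ending at $i$, so the touched \pregions are exactly the most recent ones, is a detail the paper leaves implicit but is the same argument.
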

\begin{proof}
    The time complexity follows immediately from the algorithm description, similarly with the pattern and text space complexities. For the correctness of the algorithm we need to show that none of the at most $k+1$ reverse LCE queries performed to determine the number of mismatches between $T[(i-m+1)\upto i]$ and $P$ fall outside the text substring represented by the $4(k+1)$ most recent \pregions.

    From Lemma~\ref{lem:three-regions} it follows that one LCE query could span three \pregions. As part of the kangaroo technique, we skip over a mismatch position between each LCE query. The mismatch could fall inside a new \pregion, hence no more than a total of $4(k+1)$ \pregions are involved in a series of up to $k+1$ LCE queries.
    \qed
\end{proof}

From Lemma~\ref{lem:k-mismatch} and the properties of the read-only
preprocessing model, we immediately have the following theorem.

\begin{theorem}
    The $k$-mismatch problem can be solved in the multiple stream model with $s$ texts in $O(k)$ time per character and $O(m+ks)$ space.
\end{theorem}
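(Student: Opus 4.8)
The plan is to derive this directly from Lemma~\ref{lem:k-mismatch} via the generic translation from the read-only preprocessing model to the multiple stream model established in Section~\ref{sec:lce}'s preliminaries. First I would invoke Lemma~\ref{lem:k-mismatch}, which supplies a $k$-mismatch algorithm in the read-only preprocessing model with per-character time $f = O(k)$, pattern space $g_\textup{p} = O(m)$, and text space $g_\textup{t} = O(k)$.

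Next I would apply the observation that the pattern space is computed without any knowledge of the text and accessed read-only during the online phase, so a single copy can be shared across all $s$ streams; only the text space must be replicated, once per stream. Formally, this yields a multiple stream algorithm running in $O(f) = O(k)$ time per arriving character and using $O(g_\textup{p} + s\,g_\textup{t}) = O(m + ks)$ words of space. I would also note, for completeness, that the preprocessing (building the suffix tree / \LCEP structure on the pattern, and the reverse analogues) fits in $O(m)$ space and $O(m\log m)$ time, consistent with the blanket claim made earlier; since this is independent of the texts and the streams, it does not affect the online bounds.

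There is essentially no obstacle here: the theorem is an immediate corollary once the read-only preprocessing framework and Lemma~\ref{lem:k-mismatch} are in hand. The only point deserving a sentence of care is that the online phase genuinely touches the pattern space in a read-only manner — in particular, the reverse LCE queries against the pattern use only precomputed structures and never write to pattern space — so sharing it across streams is sound. Everything else is just substituting $f$, $g_\textup{p}$, $g_\textup{t}$ into the translation.
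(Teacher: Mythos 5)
Your proposal is correct and follows exactly the paper's own route: the theorem is obtained by plugging the bounds of Lemma~\ref{lem:k-mismatch} ($f=O(k)$, $g_\textup{p}=O(m)$, $g_\textup{t}=O(k)$) into the generic read-only-preprocessing-to-multiple-streams translation, which the paper states in its preliminaries (Section~2, not Section~\ref{sec:lce}, a trivial misattribution). Your extra remark that the online phase only reads the shared pattern space is a sensible point of care but adds nothing beyond what the model already guarantees.
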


\section{$k$-difference in multiple streams}\label{sec:k-diff}

Let $D[j,i]$ denote the minimum of all $k$-bounded edit distances between the pattern prefix $P[0\upto j]$ and all suffixes of $T[0\upto i]$. For the $k$-difference problem we want to output $D[m-1,i]$ as soon as $T[i]$ arrives. We have the standard dynamic programming recurrence,
\begin{equation*}
D[j,i] = \text{min} \left \lbrace
    \begin{array}{ll}
        D[j,\, i-1]+1 & \text{ (insert)}\\
        D[j-1,\, i]+1 & \text{ (delete)} \\
        D[j-1,\,i-1] + 1 - \text{eq}(i,j) & \text{ (mismatch)} \\
        k+1 & \text{ ($k$-bounded)}\\
    \end{array} \right.
\end{equation*}
where $\text{eq}(i,j)=1$ if $T[i]=P[j]$ and $0$ otherwise. For the base cases we have $D[j,-1]=\min(k+1,j+1)$ and $D[-1,i]=0$ for all $i,j$.

We now present a solution for the $k$-difference problem, first in
the read-only preprocessing model and then give the final result in
the multiple stream model as required.  Whenever a text character $T[i]$ arrives such that $i$ is a multiple of $k$, we start a \emph{child process} which will be responsible for outputting $D[m-1,i']$ for all $i'\in \{(i+k),\ldots,(i+2k-1)\}$ as each such $T[i']$ arrives (Interval~2 in Fig.~\ref{fig:dynamic}). Therefore, there is a child process responsible for each and every output. The $k$ text arrivals between $T[i]$ and the first output for a child process will, as explained below, give us enough time to prepare for the outputs. Observe that at most two child processes are running at any one time, hence we only need to show that one child process can be implemented in $O(k)$ time per character,  $O(m)$ pattern space and $O(k)$ text space.
\begin{figure}[ht]
    \hskip 45pt \includegraphics[scale=1.0]{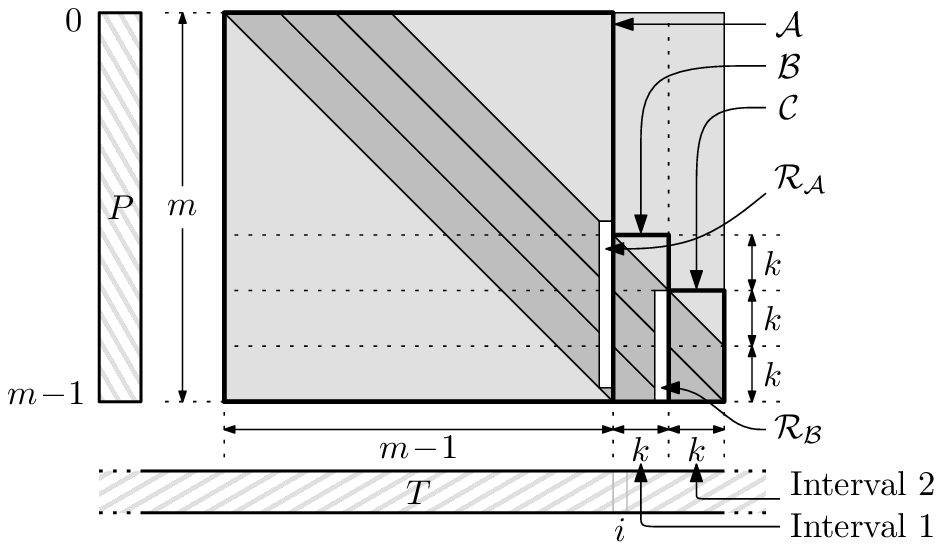}
    \caption{\label{fig:dynamic} The dynamic programming table for $k$-difference.}
\end{figure}

The operation of a child process is divided into three stages, each responsible for computing cells of the blocks denoted $\calA$, $\calB$ and $\calC$, respectively, of the dynamic programming table in~Fig.~\ref{fig:dynamic}. In the first stage, which runs during the $k/2$-length text interval starting with the arrival of $T[i]$ (first half of Interval~1 in Fig.~\ref{fig:dynamic}), the child process will compute the cells marked $\calR_\calA$ in Fig.~\ref{fig:dynamic}, that is the cells $D[(m-3k-1)\upto (m-2),\,i-1]$. To do this we use the technique of Landau-Vishkin~\cite{LV:1988} which is an offline algorithm that we run on block $\calA$. Their algorithm takes $O(k^2)$ time and uses $O(m)$ space by operating along the diagonals of the dynamic programming table (shaded in dark gray in the figure). Their algorithm is based around LCE queries of the text, which we can take advantage of in a similar fashion to the $k$-mismatch algorithm from the previous section. We will see below that we only need to store $O(k)$ \pregions of the text to perform the LCE queries, hence we can run their algorithm in $O(k)$ text space. The running time of $O(k^2)$ is spread evenly over $k/2$ text arrivals, and therefore takes $O(k)$ time per character.

Once the first stage is completed we have obtained the values of the cells marked $\calR_\calA$ in the figure. Now the second stage takes over. Here we compute the cells of block $\calB$ of the dynamic programming table by direct use of the recurrence above. The work is spread evenly over the next $k/2$ text arrivals (second half of Interval~1) by computing two columns of the block $\calB$ per arrival. The final column, marked $\calR_\calB$ in the figure, is therefore done by the arrival of $T[i+k]$. Thus, the second stage takes $O(k)$ time per character and uses only $O(k)$ space. We should mention that the values of cells on the boundary of $\calB$ (excluding $\calR_\calA$) that are used in the recurrence when computing block $\calB$ are all set to $\infty$. This however does not affect the correctness of the computed values of $\calR_\calB$.

In the final and third stage we compute all cell values of block~$\calC$ by direct use of the recurrence, like we did for block $\calB$ during the second stage. We use the values of $\calR_\calB$ and set the other boundary cells to $\infty$ (which does not affect the correctness of the final output). For each arriving character during Interval~2 in Fig.~\ref{fig:dynamic}, we compute the corresponding column of  block $\calC$. Therefore the running time is $O(k)$ per character, and space usage is $O(k)$.

All steps of the child process described above, except for the real-time LCE processing, require only $O(k)$ space. During the pattern preprocessing phase, as for the $k$-mismatch algorithm above, we will construct the required $\LCEP$ structure and the suffix tree of the pattern, and store both these structures as well as a copy of the pattern, using a total of $O(m)$ pattern space.
We modify the LCE processing to store only the most recent $5(k+1)$ \pregions.
We will show that whenever the edit distance is $k$ or less, storing the $5(k+1)$ most recent \pregions is sufficient to support every LCE query. Thus, if any LCE query stretches beyond these \pregions, the edit distance must be more than $k$. The following lemma summarises the result. The result for the multiple stream model follows immediately from the above observation about the relationship between the models.


\begin{lemma}
    \label{lem:k-diff}
    The $k$-difference problem can be solved in the read-only
    preprocessing model in  $O(k)$ time per character using $O(m)$ pattern space and $O(k)$ text space.
\end{lemma}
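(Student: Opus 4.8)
The plan is to verify two things: the time and space bounds (which follow almost mechanically from the three-stage description already given) and, as the real content, the claim that storing only the $5(k+1)$ most recent \pregions suffices for all LCE queries performed by a child process, as long as the edit distance is at most $k$. I would first dispose of the routine parts: each of the three stages runs in $O(k)$ time per character (Landau--Vishkin's $O(k^2)$ work spread over $k/2$ arrivals; two columns of $\calB$ per arrival, each column costing $O(k)$; one column of $\calC$ per arrival), and the non-LCE data each stage manipulates is a constant number of length-$O(k)$ diagonals or columns, hence $O(k)$ text space. The pattern space is $O(m)$: a copy of $P$, its suffix tree, and the \LCEP structure, exactly as in the $k$-mismatch algorithm of Section~\ref{sec:k-mismatch}. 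So the lemma reduces to bounding how far back in the text any LCE query can reach.

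Next I would pin down which LCE queries actually occur. They are all made inside the Landau--Vishkin computation on block $\calA$, which spans pattern rows $P[(m-3k-1)\upto (m-2)]$ and the diagonals of the dynamic programming table that pass through those rows. A Landau--Vishkin LCE query starting from a cell $D[j,i'']$ follows the matching portion of a single diagonal; since we have at most $k$ errors along any surviving diagonal and each diagonal segment within block $\calA$ has length $O(k)$, every such query both starts and ends at a text position within an $O(k)$-length window of the most recent text position $i$. The worst case is the query that starts furthest back: the top-left corner of block $\calA$ corresponds to a text position roughly $3k$ behind $i$, and its LCE can extend another $O(k)$ characters, so all queried text positions lie within the last $\Theta(k)$ characters of the text. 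I would make the constant explicit by tracking the diagonal indices: block $\calA$ touches text positions in the range $[i-3k-1,\,i-1]$ for its cells, and an LCE from such a cell that stays ``alive'' within the $k$-bounded regime cannot extend past the previously computed text, so the span of all relevant text positions fits inside roughly $4k$ characters.

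The main obstacle, and the place I would spend the most care, is converting a bound on \emph{character count} into a bound on \emph{number of \pregions}. This is exactly the same subtlety as in Lemma~\ref{lem:k-mismatch}: a single LCE query can, by Lemma~\ref{lem:three-regions}, overlap three \pregions, and between LCE queries the kangaroo-style jumps over error positions can each land in a fresh \pregion, so a chain of $t$ LCE queries can touch up to $4t$ \pregions even though it covers only $O(t)$ characters (in the worst case each \pregion has length $1$). Here, however, the LCE queries do not form one long chain from a single starting point; they are organized along $O(k)$ diagonals, and each diagonal carries at most $k$ errors, but naively that gives a much larger count. The right way to control it is to observe that block~$\calA$ as a whole only reads text positions in a window $W$ of $\Theta(k)$ characters, and that the number of \pregions overlapping any length-$\ell$ window is at most $\ell+1$ (each \pregion, except possibly the first and last, contributes at least one character of $W$). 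So the number of distinct \pregions any child process ever consults is at most $|W|+1 = O(k)$; I would then chase the constants through Figure~\ref{fig:dynamic} to show this is at most $5(k+1)$, giving the slightly larger constant than in the $k$-mismatch case because block $\calA$ reaches further back (about $3k$ behind $i$) than the single kangaroo scan does. Finally, the contrapositive gives correctness of the truncation: if any LCE query tries to read a text position outside the stored $5(k+1)$ \pregions, we may safely declare the edit distance to exceed $k$, since we have just shown that cannot happen when it is at most $k$. The multiple-stream statement then follows immediately from the read-only preprocessing model reduction described in the preliminaries.
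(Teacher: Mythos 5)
Your accounting of the running time and of the $O(m)$ pattern space is fine, but the core of your correctness argument rests on a false premise: that block~$\calA$ ``only reads text positions in a window $W$ of $\Theta(k)$ characters.'' It does not. The output cells of block~$\calA$ are $D[(m-3k-1)\upto(m-2),\,i-1]$, and each such cell encodes an alignment of essentially the whole pattern against a suffix of $T[0\upto i-1]$; the Landau--Vishkin diagonals through these cells run the full height of the dynamic programming table, and the LCE queries used to extend matches along those diagonals touch text positions throughout $T[(i-m+1)\upto(i+2k-1)]$ --- a window of length $\Theta(m)$, not $\Theta(k)$. (The paper's proof states exactly this: every LCE query ends in the substring $T[(i-m+1)\upto(i+2k-1)]$.) Once the window has length $\Theta(m)$, your counting argument ``at most $\ell+1$ \pregions overlap a length-$\ell$ window'' yields only an $O(m)$ bound on the text space, not the claimed $O(k)$.

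The missing idea is that the $O(k)$ bound on the number of \pregions covering this $\Theta(m)$-length window is not a character-count bound at all: it comes from the hypothesis that the edit distance is at most $k$. Concretely, if $D[m-1,i']\leq k$ for some $i'\in\{(i+k),\dots,(i+2k-1)\}$, then some suffix $T[(i'-\ell+1)\upto i']$ with $\ell\geq m-k$ is obtained from $P$ by at most $k$ edit operations, and one shows --- by recording the ``origin'' in $P$ of each surviving character of this suffix and charging every break between consecutive origins to an insert, delete or mismatch --- that the suffix admits a \prepresentation with at most $2k+1$ \pregions. Padding out to the full query window $T[(i-m+1)\upto(i+2k-1)]$ costs at most $3k$ further length-one \pregions, for a total of $(2k+1)+3k\leq 5(k+1)$; minimality of the maintained \prepresentation then guarantees that the stored $5(k+1)$ most recent \pregions cover every query. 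Your closing contrapositive (if a query reaches outside the stored \pregions, safely report distance greater than $k$) is the right conclusion, but it needs this edit-distance-to-\pregion-count argument, not a window-length argument, to support it.
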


\begin{proof}[sketch]
    The time and space complexities follow from inspection of the algorithm description.
    For correctness, suppose there exists an $i'\in\{(i+k), \dots, (i+2k-1)\}$ such that $D[m-1,i'] \leq k$. Therefore, by the problem definition, there exists an $\ell$ such that $P$ can be transformed into $T[(i'-\ell+1)\upto i']$ in at most $k$ insert, delete and mismatch operations. We will first show that this immediately implies the existence of a \prepresentation of $T[(i'-\ell+1)\upto i']$ containing at most $2k+1$ \pregions.

    Consider any transformation of $P$ into $P':=T[(i'-\ell+1)\upto i']$ which contains at most $k$ operations. We denote by $C$ the $\ell$-length array which states the `origin' of each character in $P'$: $C[j']=j$ if the transformation aligns $P'[j']$ with $P[j]$ and $P'[j'] = P[j]$, otherwise $C[j']=-\infty$, which means that $C[j']$ is the result of an insert operation or is aligned with a symbol different from $P'[j']$.

    We can construct a \prepresentation $R$ of $P'$ by a single pass of $C$ as follows.
    If $C[0]\neq -\infty$, we begin by creating the \pregion $(0,C[0],1)$. Otherwise, we create the region $(0,j'',1)$ where $j''$ is any index such that $P[j'']=P'[0]$.
    For each $j'>0$, we consider three disjoint cases:

    \begin{enumerate}
        \item $C[j'] = C[j'-1]+1$. Increase the length of the most recent region by one.
        \item $C[j'] > C[j'-1]+1$. Create a new \pregion $(j',C[j'],1)$.
        \item $C[j']=-\infty$. Create a new \pregion $(j',j'',1)$, where $j''$ is any index such that $P'[j']=P[j'']$.
    \end{enumerate}




    We now consider the number of \pregions in the \prepresentation $R$. An additional \pregion is only created when either case~2 or case~3 occurs in the construction. Case~3 occurs only when $C[j']=-\infty$. However, by the definition of $C$, each $-\infty$ corresponds $P[j']$ to being a result of a mismatch or insert operation of which there are at most $k$ in total.
    Therefore case~3 occurs at most $k$ times. Case~2 occurs when $C[j'] > C[j'-1]+1$. By the definition of $C$, this implies that either some character $P[j'']$ with $C[j'-1] < j'' < C[j']$ was deleted or $C[j'-1]= - \infty$ which in turn implies that either a mismatch or insert operation occurred at $C[j'-1]$. Hence case~2 can occur at most $k$ times. The total number of \pregions is therefore upper bounded by $2k+1$ as a mismatch or insert can cause two new p-regions to be created, one at some $C[j'-1]$ and another at $C[j']$.

    To see that $5(k+1)$ \pregions are enough, first observe that every LCE query performed ends in the substring $T[(i-m+1) \ldots (i+2k-1)]$.
    As $P$ can be transformed into $T[(i'-\ell+1)\upto i']$ in at most $k$ moves, we have that $\ell\geq m-k$. We also have that $i'\in\{(i+k), \dots, (i+2k-1)\}$ and hence $T[(i'-\ell+1)\upto i']$ is a substring of $T[(i-m+1) \ldots (i+2k-1)]$. We can then then convert the \prepresentation  of  $T[(i'-\ell+1)\upto i']$ into a \prepresentation of $T[(i-m+1) \ldots (i+2k-1)]$ by adding at most $3k$ \pregions of length one each. Thus, $(2k+1)+3k\leq 5(k+1)$ \pregions suffice to support any LCE query.    \qed
\end{proof}


\begin{theorem}
    The $k$-difference problem can be solved in the multiple stream model with $s$ texts in $O(k)$ time per character and $O(m+ks)$ space.
\end{theorem}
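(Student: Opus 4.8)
The plan is to obtain the theorem as an immediate corollary of Lemma~\ref{lem:k-diff} together with the generic reduction from the read-only preprocessing model to the multiple stream model, exactly as already done for exact matching (Theorem~\ref{thm:exact}) and for $k$-mismatch. First I would recall what Lemma~\ref{lem:k-diff} supplies: a $k$-difference algorithm in the read-only preprocessing model with $f = O(k)$ time per arriving character, pattern space $g_\textup{p} = O(m)$ --- namely the pattern, its suffix tree and the $\LCEP$ structure, all produced once during preprocessing --- and text space $g_\textup{t} = O(k)$, consisting of the $5(k+1)$ most recent \pregions of the stream plus the $O(k)$-width blocks $\calA,\calB,\calC$ of the dynamic programming table held by the at most two child processes that are live at any instant.

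Next I would invoke the key structural property of the model: the pattern space is a function of $P$ alone and is accessed read-only during the online phase, so a single shared copy suffices for all $s$ streams, while each stream needs its own private copy of the text space. Concretely, I would keep one shared pattern space and, for each stream $t\in\{1,\dots,s\}$, a separate region of $O(k)$ text space recording that stream's current \prepresentation suffix and child-process state; when a symbol arrives on stream $t$ we run one step of the Lemma~\ref{lem:k-diff} algorithm on stream $t$'s text space against the shared pattern space and report $D[m-1,i]$ precisely when it is at most $k$. Summing gives total space $O(g_\textup{p} + s\, g_\textup{t}) = O(m + ks)$ words and unchanged time $O(k)$ per character; the one-off preprocessing fits in $O(m)$ space and $O(m\log m)$ time as remarked earlier.

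I do not expect a genuine obstacle here, since this is the same reduction used twice already; the only point worth a sentence is that the child-process scheduling --- spawning a new child at every $k$-th arrival of a given stream and retiring it $2k$ arrivals later, with at most two live simultaneously --- is entirely internal to that stream, so its working memory is counted once per stream rather than globally and is already absorbed into the $O(k)$ text-space bound of Lemma~\ref{lem:k-diff}. With that observation the theorem follows directly.
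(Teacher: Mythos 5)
Your proposal is correct and follows exactly the paper's route: the theorem is obtained as an immediate corollary of Lemma~\ref{lem:k-diff} combined with the generic translation from the read-only preprocessing model to the multiple stream model (shared read-only pattern space, one private $O(k)$ text space per stream), giving $O(m+ks)$ space and $O(k)$ time per character. The extra remark that the child-process scheduling is internal to each stream is a sensible clarification but does not change the argument.
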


\section{Space lower bounds}\label{sec:space}

In this section we show that our space upper bounds for $k$-mismatch
and $k$-difference are optimal (up to a log factor). We also show
that for several other common distance measures, pattern matching in $s$ streams requires $\Omega(ms)$ bits of space, implying that we may not do better than treating each stream independently.

The log sized gap between our lower bounds and upper bounds comes from
the fact that we state the lower bounds in bits whereas the upper
bounds are given in words. A smaller gap could be obtained by
considering large alphabets (see e.g.\@~\cite{CJPS:2011}), however for
simplicity we give our lower bounds assuming binary alphabets.

Our results are based on reductions from two one-way communication complexity problems with known lower bounds.
In a one-way communication model, only Alice can send messages to Bob and then Bob must output the correct answer.
In the \equality problem, Alice has a string $X\in\{0,1\}^{m}$ and Bob has a string $Y\in\left\{ 0,1\right\} ^{m}$. Bob must determine whether $X=Y$. The communication complexity is $\Omega(m)$ bits~\cite{Kushilevitz:97}.
In the \indexing problem, Alice has a string $X\in\{0,1\}^{m}$ and Bob has an index $n\in\{0,\ldots, m-1\}$. Bob must find $X[n]$. The problem is known to have an $\Omega(m)$ bit lower bound~\cite{Kushilevitz:97}.

\begin{theorem}
    \label{thm:k-lower}
    The $k$-mismatch and $k$-difference problems in $s$ streams both require $\Omega(m+ks)$ bits of space.
\end{theorem}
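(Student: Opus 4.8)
The plan is to prove the two parts of the bound $\Omega(m+ks)$ separately. The $\Omega(m)$ lower bound holds for a single stream and follows from a standard reduction from \equality: Alice encodes her string $X$ as the pattern $P := X$ (or as the text prefix, depending on which way we set things up), runs the preprocessing phase, and sends the resulting pattern space to Bob; Bob then appends his string $Y$ as the next $m$ symbols of the stream and checks whether a match is reported at the alignment where $P$ would line up exactly with $Y$. A full match (zero mismatches, zero edits) occurs if and only if $X=Y$, so the message Bob receives — which is exactly the algorithm's space — must be $\Omega(m)$ bits by the \equality lower bound. This argument is insensitive to $k$ since we only use the $k\geq 0$ case, and it works verbatim for both $k$-mismatch and $k$-difference.

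The interesting part is the $\Omega(ks)$ term, which must genuinely exploit having $s$ streams. Here I would reduce from $s$ independent copies of \indexing, or equivalently from an "augmented indexing"-style argument, but the cleanest route is the following direct counting reduction. Fix a pattern $P$ that is highly sensitive, e.g.\ $P = 0^m$ (or a de Bruijn-like pattern if one needs the mismatch positions to be independently controllable); more robustly, take $P$ so that a single well-placed block of $k$ text symbols can be used to encode $k$ independent bits detectable by the $k$-mismatch / $k$-difference output. Concretely, Alice holds $s$ strings $X_1,\dots,X_s \in \{0,1\}^{k}$ (one per stream), feeds into stream $\tau$ a text prefix whose last $m-1$ symbols, together with one more symbol Bob will later supply, determine whether the count of mismatches against $P$ is $\leq k$; the key point is to arrange the encoding so that from the algorithm's memory state after Alice's feed, Bob — who holds an index $n_\tau \in \{0,\dots,k-1\}$ for each stream $\tau$ — can recover $X_\tau[n_\tau]$ by sending $O(1)$ further symbols to stream $\tau$ and observing the reported value. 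Since the $s$ instances are independent and each \indexing instance on $k$ bits needs $\Omega(k)$ bits of communication, and the only thing passing from "Alice's phase" to "Bob's phase" is the algorithm's working memory, that memory must be $\Omega(ks)$ bits. (The multiple-stream structure is what forces the $s$ factor: the memory cannot be reused across streams because Bob's queries to different streams are adversarially interleaved and each must be answered correctly.)

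The main obstacle I anticipate is the gadget design in the $\Omega(ks)$ part: one needs a pattern $P$ and, for each stream, a text prefix of length $\Theta(m)$ such that (i) the number of mismatches/edits at the critical alignment is a prescribed affine function of a single queried bit of $X_\tau$, (ii) this holds \emph{simultaneously and independently} across all $s$ streams sharing the same $P$, and (iii) no spurious matches are reported at other alignments that would leak or corrupt information. For $k$-mismatch this is straightforward — reserve $k$ positions in the window, precode $k-1$ of them to mismatch or match according to a unary encoding of the index, and let the last supplied symbol carry the bit so that the Hamming distance crosses the threshold $k$ exactly when $X_\tau[n_\tau]=1$. For $k$-difference one must additionally check that insertions/deletions cannot "cheat" the distance down; padding the window with long runs that cannot be cheaply realigned (so that the edit distance at the relevant alignment coincides with the Hamming distance up to $\pm O(1)$, which is absorbed by rescaling $k$ to $\Theta(k)$) handles this. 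Once the gadget is in place, the communication bound and the union over $s$ independent instances are routine, and combining with the $\Omega(m)$ part gives $\Omega(m+ks)$ bits.
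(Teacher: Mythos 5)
Your overall strategy is the same as the paper's: \equality gives the $\Omega(m)$ term and \indexing, with $k$ bits of Alice's string routed into each of the $s$ streams, gives the $\Omega(ks)$ term. The \equality half is correct as written. The problem is the concrete gadget you sketch for the $\Omega(ks)$ half, which as stated does not work. You ask that ``$k-1$ of the reserved positions be precoded to mismatch or match according to a unary encoding of the index,'' but nobody in the protocol can do this: Alice does not know Bob's index when she streams her symbols, and Bob cannot alter symbols that are already in the stream --- he can only append. Your claim that Bob recovers the bit with ``$O(1)$ further symbols'' is also inconsistent with the need to bring the queried bit into alignment with a fixed pattern position, which generically requires $\Theta(m)$ appended padding symbols (this costs nothing in communication, so it is harmless, but it is not $O(1)$). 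Finally, the threshold arithmetic is off: $k-1$ forced mismatches plus one data bit gives a count of at most $k$, which never crosses the threshold, so the yes/no output would not distinguish the two values of the bit.

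The repair is to drop the threshold-crossing idea entirely and use the fact that both problems report the \emph{exact} distance whenever it is at most $k$. Take $P=0^m$ and let Alice stream her $k$ raw bits into each stream; this is a single \indexing instance on $ks$ bits (no direct-sum argument over $s$ independent copies is needed, though your version can be made to work). Bob, holding index $n$, pads stream $r=\lfloor n/k\rfloor$ with $m-k+(n\bmod k)$ zeros so that $X[n]$ sits at the first position of the window; the reported distance $d$ is then the number of ones in the window (which is at most $k$, so it is reported exactly, and for $k$-difference it coincides with the Hamming distance since the text suffix in question is $0^m$ outside the few one-positions). Bob appends one more zero, reads the new value $d'$, and outputs $X[n]=d-d'$. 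This avoids both the index-dependent precoding and the separate insertion/deletion ``anti-realignment'' padding you propose for $k$-difference, since the same all-zeros pattern handles both problems. With that substitution your argument matches the paper's proof.
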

\begin{proof}
    First consider the case where $m\geq ks$. We reduce from the \equality problem, where Alice has a string $X\in\left\{ 0,1\right\}^{m}$ and Bob has a bit string $Y\in\left\{ 0,1\right\} ^{m}$.
    Let the pattern $P$ be the string $X$. Let $A$ be any algorithm that solves either $k$-mismatch or $k$-difference on the pattern $P$. Alice sends the internal state of $A$ to Bob, who feeds the algorithm with the string $Y$ in one of the streams. The output is~0 if and only if $X=Y$, hence $\Omega(m)$ bits of space is required.

    Now consider the case where $m<ks$. We reduce from the \indexing problem, where Alice has a string $X\in\left\{ 0,1\right\}^{ks}$ and Bob has an index $n\in\{0,\ldots, ks-1\}$. Let $A$ be any algorithm that solves either $k$-mismatch or $k$-difference on the pattern $P=\{0\}^m$. Alice feeds each of the $s$ streams with $k$ bits from her string $X$ such that the first stream is fed the first $k$ bits of $X$, the second stream is fed the next $k$ bits of $X$, and so on. Alice then sends the internal state of $A$ to Bob who now wants to determine $X[n]$ which was fed into stream $r=\lfloor n/k\rfloor$. Bob feeds the stream $r$ with $m-k+(n\bmod k)$ 0s, which ensures that $X[n]$ is aligned with the first position of $P$. Let $d$ be the output by $A$ at this alignment and observe that for both $k$-mismatch and $k$-difference, $d$ equals the number of 1s in the last $m$ symbols of the stream $r$. Bob now feeds another 0 into the stream $r$. Let $d'$ be the new output by $A$. It follows that $X[n]=0$ if and only if $d=d'$, hence $\Omega(ks)$ bits of space is required.    The space lower bound of $\Omega(m+ks)$ bits is obtained by combining the two cases.
    \qed
\end{proof}

\begin{theorem}
    Exact matching in $s$ streams requires $\Omega(m+s)$ bits of space.
\end{theorem}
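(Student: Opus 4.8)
The plan is to establish the two parts of the bound, $\Omega(m)$ and $\Omega(s)$, by separate reductions from the two one-way communication complexity problems already introduced, and then combine them. Since the bound is a sum, it suffices to show each term is a lower bound individually: any algorithm using $o(m+s)$ bits would in particular use $o(m)$ bits and $o(s)$ bits, contradicting whichever reduction applies.

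For the $\Omega(m)$ lower bound I would reduce from \equality, exactly mirroring the first case of Theorem~\ref{thm:k-lower}. Alice holds $X\in\{0,1\}^m$ and Bob holds $Y\in\{0,1\}^m$. Set the pattern $P=X$, let $A$ be any algorithm solving exact matching on $s$ streams, and have Alice run the preprocessing and online phases of $A$ on no input, then send the internal state to Bob; Bob feeds $Y$ into a single stream and reads off whether a match is reported at the alignment where all $m$ symbols of $Y$ have arrived. A match is reported precisely when $X=Y$, so $A$ must use $\Omega(m)$ bits.

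For the $\Omega(s)$ lower bound I would reduce from \indexing with a string of length $s$. Alice holds $X\in\{0,1\}^s$ and Bob holds an index $n\in\{0,\dots,s-1\}$. Take the pattern to be a fixed length-$m$ string that begins with a $1$ and is chosen so that it matches only one specific text window — for instance $P=1\,0^{m-1}$, whose only occurrence in a text is at an alignment immediately preceded by enough $0$s. Alice feeds the single bit $X[r]$ into stream $r$ for each $r$ (so each stream currently contains one symbol), sends the internal state of $A$ to Bob, and Bob then pumps $m-1$ zeros into stream $n$ so that $X[n]$ is aligned with $P[0]$; a match is reported at that alignment if and only if $X[n]=1$ (the trailing $0^{m-1}$ of the window matches the $0$s Bob just fed, and the leading symbol matches $P[0]=1$ iff $X[n]=1$). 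Hence $A$ distinguishes $X[n]=0$ from $X[n]=1$ and must use $\Omega(s)$ bits.

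The only point requiring care — and the main obstacle, though a mild one — is choosing the pattern in the \indexing reduction so that a reported match at the target alignment unambiguously reveals $X[n]$, i.e. ruling out spurious matches caused by the single stray symbols sitting in the other streams or by partial overlaps within stream $n$ itself. Using $P=1\,0^{m-1}$ (or symmetrically $0^{m-1}1$ read with reverse alignment) handles this: within stream $n$ the only candidate window containing a $1$ in the $P[0]$ position is the one Bob engineered, and the other streams, each holding a single symbol, cannot contain a length-$m$ window at all for $m\ge 2$ (the case $m=1$ being trivial since then $\Omega(m)=\Omega(1)$ and only the $\Omega(s)$ bound is at issue, which still goes through). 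Combining the two cases yields the claimed $\Omega(m+s)$-bit lower bound. \qed
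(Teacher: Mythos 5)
Your proof is correct and takes essentially the same approach as the paper: an \equality reduction for the $\Omega(m)$ term and an \indexing reduction in which Alice places one bit in each of the $s$ streams and Bob pumps $m-1$ zeros into the queried stream for the $\Omega(s)$ term. The only cosmetic difference is your choice of pattern $1\,0^{m-1}$ (match iff $X[n]=1$) where the paper implicitly keeps $P=0^m$ (match iff $X[n]=0$); both work.
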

\begin{proof}
    Following the proof of Theorem~\ref{thm:k-lower}, we reduce from the \equality problem if $m>s$, otherwise we reduce from \indexing where Alice feeds each stream with either one 0 or one 1. By feeding $m-1$ 0s into any stream, Bob can determine any of the $s$ bits.
    \qed
\end{proof}

We now turn our attention to the $L_{1}$, $L_{2}$ and Hamming distance problems. For any constant $p$, the $L_p$ distance between two equal length strings $X$ and $Y$ is given by $d_p(X,Y)=\big(\sum_{j} |X[j]-Y[j]|^p\big)^{1/p}$.

\begin{theorem}
    Computing the $L_{1}$, $L_{2}$ and Hamming distances, as well as the cross-correlation/convolution in $s$ streams, requires $\Omega(ms)$ bits of space.
\end{theorem}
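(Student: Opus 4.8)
The plan is to reduce from the \indexing problem on a string of length $ms$, in the same spirit as the second case of the proof of Theorem~\ref{thm:k-lower}, but now taking advantage of the fact that a single Hamming, $L_1$ or $L_2$ query already reports an aggregate over an entire length-$m$ text window. Suppose Alice holds $X\in\{0,1\}^{ms}$ and Bob holds an index $n\in\{0,\upto,ms-1\}$. Fix the pattern $P=\{0\}^m$ and let $A$ be any algorithm solving the Hamming distance (respectively $L_1$ or $L_2$) problem in $s$ streams for this pattern. Alice partitions $X$ into $s$ consecutive blocks of length $m$ and feeds the $t$-th block $X[tm\upto(t+1)m-1]$ into stream $t$, for every $t\in\{0,\upto,s-1\}$. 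She then sends the internal state of $A$ to Bob, a message whose length is at most the space used by $A$.

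Bob now decodes $X[n]$. Write $r=\lfloor n/m\rfloor$ and $p=n\bmod m$, so that $X[n]=X[rm+p]$ was the $(p{+}1)$-st symbol fed into stream $r$, which at this moment holds exactly the $m$ bits $X[rm\upto(r+1)m-1]$. Bob feeds $p$ further $0$s into stream $r$; the last $m$ symbols of that stream are then $X[n],X[n+1],\upto,X[(r+1)m-1]$ followed by $p$ zeros, so the value reported by $A$ equals the number of $1$s among $X[n\upto(r+1)m-1]$ in the Hamming and $L_1$ cases and the square root of that quantity in the $L_2$ case; call this reported value $d$. Bob then feeds one more $0$ into stream $r$: the window shifts by exactly one, dropping precisely $X[n]$, and the new reported value $d'$ is derived in the same way from the number of $1$s among $X[n+1\upto(r+1)m-1]$. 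Hence $X[n]=d-d'$ for Hamming and $L_1$, and $X[n]=d^2-d'^2$ for $L_2$, so Bob recovers $X[n]$. Since \indexing on a string of length $ms$ requires $\Omega(ms)$ bits of communication, $A$ must use $\Omega(ms)$ bits of space.

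For the cross-correlation / convolution variants the construction is identical except that we take $P=\{1\}^m$: the (cross-)correlation of an all-ones pattern against a length-$m$ binary text window is just the number of $1$s in that window, so the same two-query differencing argument goes through verbatim. There is essentially no hard step here; the only point requiring care is the index bookkeeping showing that feeding first $p$ and then one further $0$ into stream $r$ isolates the single bit $X[n]$ as the difference of two consecutive reported values, together with the trivial adaptations (squaring for $L_2$, using the all-ones pattern for correlation). It is worth remarking in closing that this bound subsumes the $\Omega(m+s)$ lower bound for exact matching and matches the $\Omega(m+ks)$ bound of Theorem~\ref{thm:k-lower} at $k=\Theta(m)$, and that it makes precise the claim that for these distance measures one cannot do better than processing each of the $s$ streams separately.
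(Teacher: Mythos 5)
Your proposal is correct and follows essentially the same route as the paper: a reduction from \indexing on an $ms$-bit string, with Alice distributing $m$ bits per stream, Bob padding the relevant stream to align $X[n]$ with the first pattern position, and differencing two consecutive outputs to isolate the single bit (the paper uses $P=\{1\}^m$ and pads with $1$s throughout, while you use $P=\{0\}^m$ and pad with $0$s for the distance measures — an immaterial difference). Your handling of $L_2$ by squaring Bob's two readings is equivalent to the paper's remark that a lower bound for the squared $L_2$ distance is a lower bound for $L_2$ itself.
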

\begin{proof}
    We reduce from the \indexing problem, where Alice has a string $X\in\left\{ 0,1\right\}^{ms}$ and Bob has an index $n\in\{0,\ldots, ms-1\}$. Let $A$ be any algorithm that solves either of the problems in the statement of the lemma on the pattern $P=\{1\}^m$, where instead of computing the
    $L_2$ distance, the square of the distance is computed. Observe that a lower bound for computing the square of the $L_2$ distance is also a lower bound for the $L_2$ distance. Each of the problems is now \emph{local} in the sense that the output is the sum of $m$ position-wise values. Following the idea in the proof of Theorem~\ref{thm:k-lower}, Alice feeds each stream with $m$ bits of her string $X$ before sending the internal state to Bob. In order to determine $X[n]$, Bob feeds the appropriate stream $r$ with enough 1s to align $X[n]$ with the first position of $P$. By feeding another 1 into the stream $r$ and comparing the two outputs, Bob can determine the value of $X[n]$.
    \qed
\end{proof}

\section{Open problems}\label{sec:open}

The space complexity of the results we give are tight to within a log factor when no randomisation is permitted. Further, the time complexity of both the exact matching and $k$-difference algorithms we give match that of the fastest known offline algorithms per arriving symbol.  However, for $k$-mismatch there remains a gap of approximately $O(\sqrt{k})$ between the fastest single stream algorithm~\cite{CS:2010} and the time complexity we give for multiple streams. There is an even more pronounced gap for the special case of constant sized alphabets when the bound $k$ is relatively large. Here the $k$-mismatch problem can be solved in a single stream in $O(\log^2 m)$ time per symbol and $O(m)$ space~\cite{CEPP:2008}, independent of the value of $k$. It would be interesting to consider whether there is a $O(\text{poly}(\log m))$ time, multiple stream algorithm using only $O(m+ks)$ space in this case.

We have seen that the read-only preprocessing model is a useful
conceptual tool for developing algorithms in the multiple stream
model. In particular we have used the fact that any efficient algorithm in the former model immediately gives an efficient algorithm in the latter. It is natural to wonder whether these models are in fact equivalent. We conjecture that for any $O(g_\textup{p} + sg_\textup{t})$ space algorithm for a pattern matching problem in the multiple stream model (where $g_\textup{p},g_\textup{t}$ do not depend on $s$), there is an $O(g_\textup{p})$ pattern space, $O(g_\textup{t})$ text space algorithm in the read-only preprocessing model with the same time complexity per character.

If we are concerned with
randomised computation where each output has to be correct with some
(arbitrarily large) constant probability, we can derive new space
lower bounds for all three problems with some modification. In particular, $k$-mismatch and
$k$-difference will require at least $\Omega(\log m + ks)$ bits of
space and exact matching  $\Omega(\log m + s)$ bits of space. These
lower bounds follow from the randomised counterpart of \equality and
\indexing. The randomised one-way communication complexity with private
randomness for \equality is $\Theta(\log m)$ bits~\cite{Yao:79}, and
for \indexing it remains $\Omega(m)$ bits (see~\cite{JKS:08} for an
elementary proof).  It is not yet clear whether these randomised lower bounds for
the multiple streams problem can be met by matching algorithmic upper bounds.

\bibliographystyle{plain}
\bibliography{multbib}


\end{document}